\DeclareRobustCommand{\cev}[1]{%
  \mathpalette\do@cev{#1}%
}
\newcommand{\do@cev}[2]{%
  \fix@cev{#1}{+}%
  \reflectbox{$\m@th#1\vec{\reflectbox{$\fix@cev{#1}{-}\m@th#1#2\fix@cev{#1}{+}$}}$}%
  \fix@cev{#1}{-}%
}
\newcommand{\fix@cev}[2]{%
  \ifx#1\displaystyle
  \mkern#23mu
  \else
  \ifx#1\textstyle
  \mkern#23mu
  \else
  \ifx#1\scriptstyle
  \mkern#22mu
  \else
  \mkern#22mu
  \fi
  \fi
  \fi
}
\newcounter{theoremcounter}
\newtheorem{theorem}{Theorem}
\newenvironment{proof}[1][Proof]{\begin{trivlist}
\item[\hskip \labelsep {\bfseries #1}]}{\end{trivlist}}
\newcommand{\qed}{\hfill $\blacksquare$}
\newcommand{\bs}{\boldsymbol}
\newcommand{\bra}[1]{\left\langle #1\right|}
\newcommand{\ket}[1]{\left|#1\right\rangle}
\newcommand{\expval}[3]{\left\langle #1\middle|#2\middle|#3\right\rangle}
\newcommand{\ketbra}[2]{\ket{#1}\bra{#2}}
\newcommand{\vast}{\bBigg@{4}}
\newcommand{\Vast}{\bBigg@{5}}
\begin{document}

\title{Measurement Contextuality and Planck's Constant}
\author{Lucas Kocia}
\affiliation{Department of Physics, Tufts University, Medford, Massachusetts 02155, U.S.A.}
\affiliation{National Institute of Standards and Technology, Gaithersburg, MD, 20899}
\author{Peter Love}
\affiliation{Department of Physics, Tufts University, Medford, Massachusetts 02155, U.S.A.}
\begin{abstract}
  Contextuality is a necessary resource for universal quantum computation and non-contextual quantum mechanics can be simulated efficiently by classical computers in many cases. Orders of Planck's constant, \(\hbar\), can also be used to characterize the classical-quantum divide by expanding quantities of interest in powers of \(\hbar\)---all orders higher than \(\hbar^0\) can be interpreted as quantum corrections to the order \(\hbar^0\) term. We show that contextual measurements in finite-dimensional systems have formulations within the Wigner-Weyl-Moyal (WWM) formalism that require higher than order \(\hbar^0\) terms to be included in order to violate the classical bounds on their expectation values. As a result, we show that contextuality as a resource is equivalent to orders of \(\hbar\) as a resource within the WWM formalism. This explains why qubits can only exhibit state-independent contextuality under Pauli observables as in the Peres-Mermin square while odd-dimensional qudits can also exhibit state-dependent contextuality. In particular, we find that qubit Pauli observables lack an order \(\hbar^0\) contribution in their Weyl symbol and so exhibit contextuality regardless of the state being measured. On the other hand, odd-dimensional qudit observables generally possess non-zero order \(\hbar^0\) terms, and higher, in their WWM formulation, and so exhibit contextuality depending on the state measured: odd-dimensional qudit states that exhibit measurement contextuality have an order \(\hbar^1\) contribution that allows for the violation of classical bounds while states that do not exhibit measurement contextuality have insufficiently large order \(\hbar^1\) contributions.
\end{abstract}
\maketitle

\section{Introduction}
\label{sec:intro}

The Wigner-Weyl-Moyal (WWM) formalism is a particularly powerful representation of quantum mechanics based on quasi-probability functions. Starting from Wootters' original derivation of discrete Wigner functions~\cite{Wootters87}, there has been much work on analyzing states and operators in finite Hilbert spaces by considering their quasiprobability representation on continuous and discrete support~\cite{Berezin77,Varilly89,Leonhardt95,Rivas99,Heiss00,Bianucci02,Ruzzi05,Veitch12,Mari12,Marchiolli12,Gross06,Gross08,Ferrie08,Ferrie09,Ferrie10,Ferrie11,Wallman12,Howard14}. The WWM formalism gives a discrete quasiprobability representation in terms of a classical phase space and explicitly introduces classical mechanics and quantum corrections in terms of higher order corrections with respect to \(\hbar\).

Contextuality is a quantum resource whose presence in an experiment has been shown to be equivalent to negativity in the associated discrete Wigner functions and Weyl symbols of the states, operations, and measurements that are involved~\cite{Veitch12, Mari12}. Preparation contextuality~\cite{Spekkens08} has been shown to be equivalent to the necessity of including higher than order \(\hbar^0\) terms in path integral expansions of unitary operators in the WWM formalism for proper computation of propagation~\cite{Kocia16,Kocia17_2}. As a result, stabilizer state propagation under Clifford gates can be shown to be non-contextual since it does not require higher than \(\hbar^0\) terms within WWM, or equivalently, the Wigner function of stabilizer states are non-negative and the Weyl symbols of Clifford gates are positive maps~\cite{Rivas99,Rivas00,Gross08}.

Here we complete the characterization of the relationship between orders of \(\hbar\) in WWM and contextuality by showing that measurement contextuality~\cite{Spekkens05,Spekkens08} is also equivalent to non-classicality as dictated by powers of \(\hbar\). Specifically, contextuality requires us to include higher than order \(\hbar^0\) terms in the \(\hbar\) expansion of observables within the WWM formalism to obtain expectation values that violate classical bounds. Moreover, this equivalency explains why qubits exhibit state-independent contextuality while odd-dimensional qudits also exhibit state-dependent contextuality. Furthermore, this confirms the result that any quasi-probability representation that defines a set of complete experimental configurations, such as the WWM, must either exhibit negativity in the quasiprobability functions for either states or measurements, or make use of a deformed probability calculus~\cite{Ferrie08} (defined below).

Let us begin by defining the context of a measurement. A projection of a quantum state onto a rank \(n \ge 2\) subspace of its Hilbert space can be decomposed into a sum of smaller rank projectors in many ways. Fixing a subset of the terms in a sum of such projectors, there are many choices for the remaining terms. Each non-commuting decomposition of the remaining terms corresponds to a ``context'' of the measurement.

Instead of projectors we may speak instead about observables. The rank \(n \geq 2\) subspace is then a degenerate eigenspace of some observable. The different contexts correspond to different choices of complete sets of commuting observables whose eigenstates are the projectors onto the different contexts. Again, different choices of complete sets of commuting observables do not commute with each other, for otherwise they would share the same eigenstates and so correspond to the same context.

For instance, considering two qubits, the measurement of \(\hat X \hat I\) corresponds to a projection onto a subspace of rank two. It can be performed in the context of \(\{\hat X \hat I,\, \hat I \hat X\}\) or \(\{\hat X \hat I,\, \hat I \hat Z\}\). The operators in each set commute with each other. However, the two operators that distinguish these contexts, \(\hat I \hat X\) and \(\hat I \hat Z\) anticommute, and the product of the operators in each set anticommute with each other~\cite{Peres91}. Hence the outcome of a measurement of \(\hat X \hat I\) is not independent of the choice of context. Each set corresponds to a projection onto the full rank four Hilbert space and is a separate context for \(\hat X \hat I\).

These two sets correspond to the first row and column of the Peres-Mermin square shown in Table~\ref{tab:peresmerminsq} (the third element in the row and/or column is redundant---its outcome is determined by the first two measurements)~\cite{Peres91}.
\begin{table}[ht]
  \begin{tabular}{|c|c|c|}
    \hline
    \(\hat X \hat I\) & \(\hat I \hat X\) & \(\hat X \hat X\)\\
    \hline
    \(\hat I \hat Z\) & \(\hat Z \hat I\) & \(\hat Z \hat Z\)\\
    \hline
  \(\hat X \hat Z\) & \(\hat Z \hat X\) & \(\hat Y \hat Y\)\\
  \hline
  \end{tabular}
  \label{tab:peresmerminsq}
\caption{The Peres-Mermin square.}
\end{table}
Every element in the table contains a Pauli observable. The operators in the same row or column commute and so can be performed independently of each other. However, operators in a particular row or column do not commute with operators in other rows and columns. While every observable in the table has \(\pm 1\) as possible outcomes, the row and columns are completely-commuting sets of operators (CCSOs) whose product have \(+1\) as their only possible outcomes due to the Pauli operator relation \(\hat \sigma_j \hat \sigma_k \hat \sigma_l = i\epsilon_{jkl} +\delta_{jk} \hat \sigma_l\). The exception is the third column, which has \(-1\) as its product's only possible outcome as can be seen from the same identity relation.

As a result, the different contexts of a particular observable in the Peres-Mermin square correspond to its row and column. Assigning classical \(\pm 1\) outcomes to the measurement of the operators in the table such that their product satisfies the constraints on the outcomes of the rows and columns is impossible. Given one of the measurements in the Peres-Mermin square, it is impossible to assign an outcome without specifying the context of the full rank observables, i.e. whether the measurement is taken column-wise or row-wise.

In this way, the Peres-Mermin square demonstrates that qubits exhibit measurement contextuality with Pauli operators~\cite{Kochen67,Redhead87,Peres91}. Furthermore, the Peres-Mermin square shows that qubits exhibit state-independent contextuality; no matter what qubit state the Peres-Mermin measurements are performed on, their outcome will always depend on the context or which other measurements are co-performed~\cite{Mermin93}.

On the other hand, higher-dimensional qudits can also exhibit measurement contextuality that is state-dependent. For odd \(d\)-dimensional qudits, it is possible to exhibit contextuality for a single qudit. In particular, Klyachko \emph{et al}.~\cite{Klyachko08} developed a scheme (called KCSB after the four authors) for a single qutrit (\(d=3\)) that exhibits state-dependent contextuality~\cite{Howard13}. KCSB and other constructions are all different demonstrations of the Kochen-Specker theorem~\cite{Kochen67}, which was also originally proposed for a qutrit, and exclude any hidden variable theory (HVT) that is non-contextual from reproducing quantum mechanics for \(d \ge 3\).

The KCSB scheme is defined as follows. Consider a set \(\Gamma\) of five rank-\(1\) projectors, \(\Gamma = \{\hat \Pi_i\}_{i=1,\ldots,5}\), which commute with neighboring pairs (\([\hat \Pi_i, \hat \Pi_{i\oplus 1}]=0\), where \(i \oplus 1\) denotes addition performed modulo \(5\)). We further impose that commuting projectors are also orthogonal to each other, i.e. they both cannot take on the value \(+1\). Such a realization is illustrated in Fig.~\ref{fig:KCSB}, where the five projectors are placed along the points of a pentagon such that projectors that share an edge commute. The two contexts for a measurement on any vertex of the pentagon corresponds to its two adjoining edges; i.e. \(\hat \Pi_1\) can be measured in the context of \(\{\hat \Pi_1, \hat \Pi_2\}\) or \(\{\hat \Pi_1, \hat \Pi_5\}\).

Finally, we define,
\begin{equation}
  \hat \Sigma_{\Gamma} = \sum_{\hat \Pi_i \in \Gamma} \hat \Pi_i.
\end{equation}
These five projectors obey \(\hat \Pi_i \hat \Pi_{i \oplus 1} = 0\). By Mermin's argument, any classical outcome \(\{0,+1\}\) one preassigns to the measurement must obey the same relationship~\cite{Mermin93}. Adjacent vertices therefore cannot both be assigned the outcome \(+1\), and so the maximum expectation value for \(\hat \Sigma_\Gamma\) is two:
\begin{equation}
  \expval{\Psi}{\hat \Sigma_{\Gamma}}{\Psi}_\text{CM} \le 2.
\end{equation}

This upper bound is higher if the above expectation value is evaluated quantum mechanically. Namely, of the three eigenstates of \(\hat \Sigma_{\Gamma}\), \(\ket{\phi_1}\), \(\ket{\phi_2}\), and \(\ket{\phi_3}\), with eigenvalues \((5 - \sqrt 5)/2 \approx 1.382\), \(1.382\) and \(\sqrt{5} \approx 2.236\) respectively, the last state can be shown to saturate the quantum bound~\cite{Klyachko08}:
\begin{equation}
  \expval{\Psi}{\hat \Sigma_{\Gamma}}{\Psi}_\text{QM} \le \sqrt{5} \approx 2.236.
\end{equation}

As a result, \(\hat \Sigma_\Gamma\) can be interpreted as a witness for contextuality; any state \(\Psi\) with expectation value \(\expval{\Psi}{\hat \Sigma_\Gamma}{\Psi} > 2\) exhibits measurement contextuality. The states \(\ket{\phi_1}\) and \(\ket{\phi_2}\) don't exhibit contextuality within the KCSB construction while \(\ket{\phi_3}\) does.

An equivalent witness for contextuality that will be more useful for us later is the sum of products of non-commuting pairs of observables. Defining this set as \(\Gamma^2\), where
\begin{eqnarray}
  \Gamma^2 &=& \left\{\hat \Pi_1 \hat \Pi_3, \hat \Pi_1 \hat \Pi_4, \hat \Pi_2 \hat \Pi_4, \hat \Pi_2 \hat \Pi_5, \hat \Pi_3 \hat \Pi_5,\right.\\
           && \left.\hat \Pi_3 \hat \Pi_1, \hat \Pi_4 \hat \Pi_1, \hat \Pi_4 \hat \Pi_2, \hat \Pi_5 \hat \Pi_2, \hat \Pi_5 \hat \Pi_3\right\},
\end{eqnarray}
this witness for contextuality can be written as
\begin{equation}
  \hat \Sigma_{\Gamma^2} = \sum_{\hat \Pi_i \hat \Pi_j \in \Gamma^2} \hat \Pi_i \hat\Pi_j.
\end{equation}

Following the conclusion from the \(\Sigma_\Gamma\) witness for contextuality---that a maximum of two non-adjacent observables can be assigned outcomes of \(+1\)---it follows that the maximum classical expectation value for \(\hat \Sigma_{\Gamma^2}\) is also two:
\begin{equation}
  \expval{\Psi}{\hat \Sigma_{\Gamma^2}}{\Psi}_\text{CM} \le 2.
\end{equation}

Again, the upper bound is higher in quantum mechanics. While \(\ket{\phi_1}\) and \(\ket{\phi_2}\) have expectation values with \(\hat \Sigma_{\Gamma^2}\) of \(0.263932<1\), \(\ket{\phi_3}\) saturates the quantum bound:
\begin{equation}
  \expval{\Psi}{\hat \Sigma_{\Gamma^2}}{\Psi}_\text{QM} \le 5 - \sqrt{5} \approx 2.76393.
\end{equation}
As a result, \(\hat \Sigma_{\Gamma^2}\) also exhibits measurement contextuality with the state \(\ket{\phi_3}\).

Just as for the Peres-Mermin square, any effort to assign more than two classical \(+1\) outcomes to the measurements in the KCSB pentagon, while satisfying the constraint that the neighbors have different outcomes, is impossible. The outcome of any measurement on a vertex depends on whether it is taken in the context of its left or right neighboring observable. However, unlike for the Peres-Mermin square, this dependence on context only holds for certain quantum states.

\begin{figure}[ht]
  \includegraphics[scale=1.0]{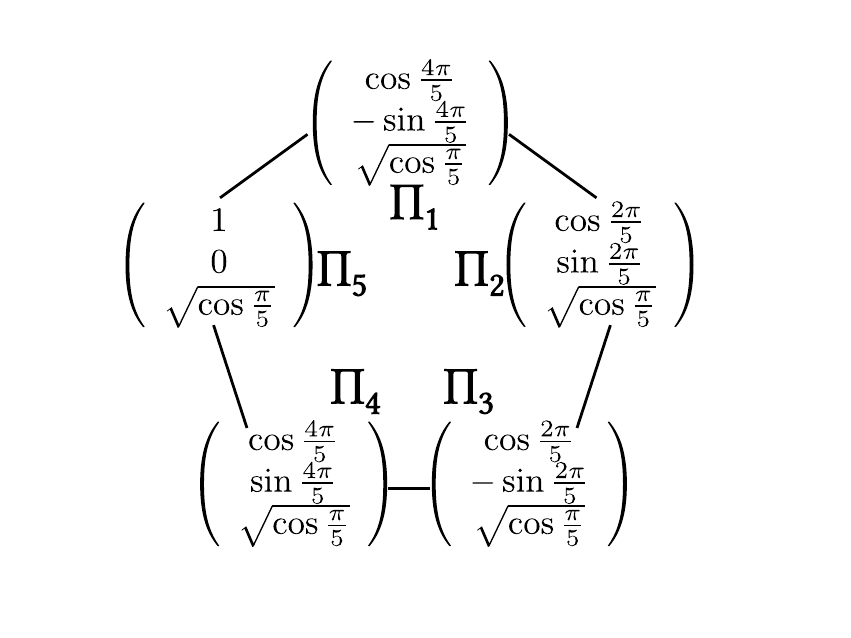}
  \caption{The KCSB contextuality construction for a qutrit. The five \(\Pi_i\) projectors are outer products of the vectors above (after normalization) and commute with each other if they share an edge.}
  \label{fig:KCSB}
\end{figure}

The state dependence of measurement contextuality for qutrits is not limited to the KCSB scheme. In general, odd-dimensional qudits can exhibit state-dependent contextuality under Pauli measurements, while qubits can only exhibit state-independent contextuality. This is a curious dichotomy.  It has been found that (preparation) contextuality is a necessary resource for universal quantum computation~\cite{Howard14} and so is of great interest for the development of quantum computers. Does this suggest that qubits are somehow harder to simulate classically compared to odd \(d>2\) qudits? This is certainly not the case for the simulation of qubit stabilizer states. Preparation and evolution of these states is efficient using the Aaronson-Gottesmann algorithm and these operations are also noncontextual for qubits~\cite{Kocia17_2}. This is the puzzle we examine in the present paper.

Another way to characterize a divide between the quantum characteristics of an observable in its different contexts and its classical limit is to consider the observable expressed as products of other observables that define a particular context. It is then instructive to examine these expressions' expansions in terms of Planck's constant \(\hbar\). Since the classical limit is reached as \(\hbar \rightarrow 0\), the leading order \(\hbar^0\) term, which is independent of the magnitude of \(\hbar\), can be interpreted as the classical term while higher order terms act as quantum corrections~\cite{Tannor07}.

Applying such an approach requires a path integral formulation of discrete quantum systems. We will use the Wigner-Weyl-Moyal (WWM) formalism which has been developed for odd \(d\)-dimensional qudits~\cite{Wootters87,Rivas99,Rivas00,Gross08} and was recently extended to qubits (\(d=2\))~\cite{Kocia17_2}. The WWM formalism is particularly useful for finite-dimensional systems because it uses the conjugate degrees of freedom of ``chords'' and ``centers'' to define Hamiltonian phase space, instead of momentum and position as is the traditional approach for infinite-dimensional continuous Hilbert spaces. The former are associated with translation and reflection operators that retain their role in a well-defined Lie group in a finite-dimensional Hilbert space, while the latter are associated with momentum and position operators which no longer form a simple Lie algebra in finite-dimensional spaces.

The WWM formalism allows for a semiclassical expansion of unitary operations in discrete systems in powers of \(\hbar\) through the van Vleck propagator approach~\cite{Almeida98}. It is also possible to semiclassically expand Hermitian operators (observables) in powers of \(\hbar\) using the Groenewold Rule~\cite{Groenewold46}, and we will show that this also holds true for the discrete WWM formalism for both \(d=2\) and odd \(d\). By considering observables that are also products of observables (corresponding to different contexts), we show that the WWM formalism can be used to formulate an equivalent statement of measurement contextuality with regards to orders of \(\hbar\).

 This can be restated in the somewhat more general language of quasi-probability distributions and frame representations of quantum mechanics. Given a measurable space \(\Gamma\) (a phase space) endowed with a positive measure \(\mu\) and a measurement that is represented by a set of conditional properties \(\left\{M_k(\alpha) \in \mathbb R\right\}\) that all satisfy some natural properties~\cite{Ferrie08}, the probability of obtaining outcome \(k\) is \(P[\psi](k) = \int_\Gamma \text d \mu(\alpha,\beta) \rho_\psi(\alpha) M_k(\beta) \left< \hat E(\alpha), \hat E(\beta) \right>\), where \(\hat E\) is any frame dual to \(\hat F\) (defined in~\cite{Ferrie08}). This quantity is a deformation of the classical probability function: \(P(k) = \int_\Gamma \text d \mu(\alpha) \rho(\alpha) M_k(\alpha)\).

We shall see that the WWM formalism reduces to such a simpler classical form for non-contextual measurements. In this case, the WWM formalism is readily seen to be a non-contextual HVT where every state has an associated ensemble of ontic states that are deterministically measured independently of their context. Observables on states that produce non-contextual outcomes can be found to reduce to indicator functions of an associated ensemble of ontic states that do not rely on a deformed probability calculus in the sense of~\cite{Ferrie08} to violate their classical bounds, while states that produce contextual outcomes do rely on a deformed probability calculus introduced by the higher than order \(\hbar^0\) corrections.

\section{A Summary of the Wigner-Weyl-Moyal Formalism}

The WWM formalism was originally developed to try to introduce classical phase space into quantum statistical mechanics~\cite{Weyl27,Wigner32,Moyal49} but has also found much use in analyzing general quantum mechanical phenomena. The formalism's relationship with classical mechanics was formalised with the introduction of a center-chord reformulation that allowed for classical trajectories to be variationally expanded with respect to \(\hbar\)~\cite{Berry77,Marinov80,Marinov79,Berezin77,Almeida92,Almeida90}. 

Taken as whole, the WWM formalism is a faithful representation of quantum mechanics, like the position (or computational basis) matrix representation of quantum operators and states. As such, the it reproduces all of the results of quantum mechanics. Operators \(\hat O\) are replaced by their Weyl symbols, \(W_O(\bs \zeta)\), which are functions on phase space. The Weyl symbols of states \(\ket \Psi\) are often called Wigner functions and the expectation value of an operator \(\hat O\) in a state \(\ket \Psi\) is given by:
\begin{equation}
  \expval{\Psi}{\hat O}{\Psi} = \int W_O(\bs \zeta) W_\Psi(\bs \zeta) \text d^n \bs \zeta.
\end{equation}

The WWM formalism was originally developed for continuous infinite-dimensional systems, and has since been extended to finite-dimensional qudits~\cite{Rivas99,Rivas00,Kocia17_2}. This extension differs depending on whether the qudits are odd-dimensional or even-dimensional (\(d=2\)). In the former case, the continuous WWM formalism can be periodized and discretized by setting \(\hbar = d/2 \pi\), so that the phase space is generated by two coordinates---momenta and positions, \(\bs \zeta \equiv \bs x = (p,q)\), which take finite scalar values. In the latter, a Grassmann algebra needs to be used such the phase space is generated by three coordinates, \(\bs \zeta \equiv \bs \xi = (\xi_p, \xi_q, \xi_r)\), which are Grassmann elements and not scalars.

Here we present a summary of the WWM formalism for odd \(d\) qudits and \(d=2\) qubits.

\subsection{Odd-Dimensional Qudits}
\label{sec:odddqudits}
To define Weyl phase space, we begin in infinite-dimensional space and define the Weyl-Heisenberg operators~\cite{Almeida98}:
\begin{equation}
  \label{eq:weylheisenberg}
  \hat T(\bs \lambda_p, \bs \lambda_q) = \exp \left(-\frac{i}{2\hbar} \bs \lambda_p \cdot \bs \lambda_q\right) \hat Z^{\bs \lambda_p} \hat X^{\bs \lambda_q}.
\end{equation}
The elements of the set \(\hat T\) are Hilbert-Schmidt orthogonal.
\(\hat Z\) and \(\hat X\) generate a Lie group and correspond to the ``boost'' operator:
\begin{equation}
  \hat Z^{\delta p} \ket{q'} = e^{\frac{i}{\hbar} \hat q \delta p} \ket{q'} = e^{\frac{i}{\hbar} q' \delta p} \ket{q'},
\end{equation}
and the ``shift'' operator:
\begin{equation}
  \hat X^{\delta q} \ket{q'} = e^{-\frac{i}{\hbar} \hat p \delta q} \ket{q'} = \ket{q' + \delta q},
\end{equation}
which satisfy the Weyl relation:
\begin{equation}
  \label{eq:contWeylrelation}
  \hat Z \hat X =  e^{\frac{i}{\hbar}} \hat X \hat Z.
\end{equation}

We define \(\hat R(\bs x)\) as the symplectic Fourier transform of \(\hat T(\bs \lambda)\):
\begin{eqnarray}
  \label{eq:contreflection}
  &&\hat R(\bs x_p, \bs x_q) = \left(2 \pi \hbar\right)^{-n} \int^\infty_{-\infty} \mbox{d} \bs \lambda e^{\frac{i}{\hbar} {\bs \lambda}^T \bs{\mathcal J} {\bs x} } \hat T(\bs \lambda),
\end{eqnarray}
where
\begin{equation}
\bs{\mathcal J} =  \left( \begin{array}{cc} 0 & -\mathbb{I}_{n}\\ \mathbb{I}_{n} & 0 \end{array}\right),
\end{equation}
for \(\mathbb{I}_n\) the \(n\)-dimensional identity. The domain of the \(\hat R\) operators, \(\bs x\), can be associated with phase space.

The Weyl symbol of operator \(\hat \rho\) can be expressed as the coefficient of the operator expanded in the basis of states \(\hat R(\bs x_p, \bs x_q)\), which are parametrized by \(\bs x\):
\begin{equation}
  \hat \rho = \int^{\infty}_{-\infty} \text d \bs x \hat {R}^\dagger(\bs x_p, \bs x_q) W_{\hat \rho}(\bs x_p, \bs x_q).
\end{equation}
If \(\hat \rho\) is a state, \(W_{\hat \rho}(\bs x)\) is the corresponding Wigner function.

Restricting this to finite odd \(d\)-dimensional systems involves setting \(\hbar = d/2\pi\), and enforcing periodic boundary conditions~\cite{Rivas99}. The points \((\bs \lambda_p, \bs \lambda_q)\) and \((\bs x_p, \bs x_q)\) become elements in \((\mathbb{Z} / d \mathbb{Z})^{2n}\) and form a discrete ``web'' or Weyl ``grid''. The generalized translation operator becomes
\begin{equation}
\hat {T}(\bs \lambda_p, \bs \lambda_q) = \omega^{-\bs \lambda_p \cdot \bs \lambda_q (d+1)/2} \hat {Z}^{ \bs \lambda_p} \hat {X}^{ \bs \lambda_q}.
\end{equation}
where \(\omega\equiv \exp 2 \pi i/d\) and \((d+1)/2\) is equivalent to \(1/2\) in mod odd-\(d\) arithmetic, and the reflection operator becomes:
\begin{equation}
\hat {R}(\bs x_p, \bs x_q) = d^{-n} \sum_{\substack{\bs \xi_p, \bs \xi_q \in \\ (\mathbb{Z} / d \mathbb{Z})^{ n}}} e^{\frac{2 \pi i}{d} (\bs \xi_p, \bs \xi_q) \bs{\mathcal J} (\bs x_p, \bs x_q)^T} \hat {T}(\bs \xi_p, \bs \xi_q).
\end{equation}

Again, the Weyl symbol of an operator \(\hat \rho\) can be expressed as a (now discrete) coefficient of the density matrix expanded in the basis of states \(\hat R(\bs x_p, \bs x_q)\):
\begin{equation}
  \hat \rho = \sum_{\substack{\bs \xi_p, \bs \xi_q \in \\ (\mathbb{Z} / d \mathbb{Z})^{ n}}} \hat {R}^\dagger(\bs x_p, \bs x_q) W_{\hat \rho}(\bs x_p, \bs x_q).
\end{equation}

If we regard the points of discrete phase space as ontic states labelled by \((\bs p, \bs q)\), as we do in classical mechanics, then any classical operation maps an ontic state \((\bs p, \bs q)\) to a new ontic state \((\bs p', \bs q')\). Evidently, these ontic states are not allowed states of quantum mechanics. However, the set of quantum states called stabilizer states are represented by non-negative Wigner functions that faithfully represent a subset of quantum states~\cite{Gross08} and we can interpret these non-negative states as representing an ensemble of ontic states. That is, in each realization of an experiment there is some true ontic state \((\bs p, \bs q)\) present initially, but repeated measurements can only sample these states from the distribution implied by \(W_\psi\). Furthermore, Clifford gates are a subset of all quantum operations that take stabilizer states to other stabilizer states. Thus, Clifford gates can be interpreted to take ontic states to other ontic states.

This interpretation implies that in the preparation and evolution parts of an experiment utilizing only stabilizer states and Clifford gates there is a real ontic state present in each realization of the experiment. The preparation samples an ontic state from the distribution \(W_\psi\) and the Clifford gates then deterministically map this ontic state to another ontic state.

Just as in Bohmian mechanics or in Bell's HVTs for a single spin one-half~\cite{Bell66}, the WWM formalism augments the wave function (which determines the Wigner function) by an ontic state---in this case of a particle in a discrete phase space. One may therefore regard the theory as ontological---there is always a real state actually present, with an epistemic restriction imposed by the set of allowed Wigner functions. The probabilistic nature of the theory therefore arises from this epistemic restriction, i.e. from our enforced ignorance of what the true ontic state actually is. This is of course the same situation as for classical probabilistic theories.

How do we regard measurement in this interpretation of the WWM formalism? Absent the epistemic restriction imposed by the wavefunction, one would simply like to measure the ontic state \((\bs p, \bs q)\) as one would for a classical particle. Of course, because the WWM formalism is a faithful representation of quantum mechanics this is impossible. Let's consider measurements of \(\bs p\) and \(\bs q\) separately. Given \(W_\psi(\bs x_p, \bs x_q)\) there is an implied marginal distribution of \(\bs x_p\) or \(\bs x_q\):
\begin{equation}
P[\psi](\bs x_p)=\sum_{\bs x_q'}W_\psi(\bs x_p, \bs x_q')=\sum_{\bs x_p', \bs x_q'}\delta_{\bs x_p, \bs x_p'}W_\psi(\bs x_p', \bs x_q').
\end{equation}
and:
\begin{equation}
Q[\psi](\bs x_q)=\sum_{\bs x_p'}W_\psi(\bs x_p', \bs x_q)=\sum_{\bs x_p', \bs x_q'}\delta_{\bs x_q, \bs x_q'}W_\psi(\bs x_p', \bs x_q')
\end{equation}

As usual in HVTs, we can therefore interpret measurements in terms of indicator functions on the discrete phase space. In this simple example, \(\delta_{\bs x_p, \bs x_p'}\) and \(\delta_{\bs x_q, \bs x_q'}\). How do these indicator functions arise in the WWM formalism given either a Hermitian observable or more generally a POVM for the measurement?

The POVM representation of the measurement of a (Hermitian) observable is just given by the set of orthogonal projectors onto a complete basis of distinct eigenvectors. The Weyl symbols of these projectors are of course just the Wigner functions of the corresponding state, which are normalized characteristic functions of the support of those states. In summary, the indicator functions for a general POVM for a measurement corresponds to the Weyl symbols of the associated orthogonal set of projectors.

On the other hand, the indicator functions of an observable \(\hat A\) in WWM is simply its Weyl symbol \(W_A(\bs x)\). In this paper, we are interested in observables that are expressed as products with other observables (e.g. \(\hat A \hat B\)) to indicate the contexts of the measurement.

In the continuous case, it can be shown by Groenewold's Rule~\cite{Groenewold46} that the Weyl symbol of a product of operators is equal to the product of their Weyl symbols up to order \(\hbar^0\)~\cite{Almeida98}:
\begin{equation}
  W_{A B}(\bs x) = W_A(\bs x) W_B(\bs x) + \mathcal O(\hbar).
\end{equation}
This identity remains true after discretization and periodization for an odd \(d\) dimensional system.

This means that in the classical limit (when \(\hbar \rightarrow 0\) and the order \(\hbar^0\) terms are the only ones remaining non-zero), the Weyl symbol of a product of observables is the product of the Weyl symbols of the observables. In other words, the indicator functions of the observables become independent of each other---an observation that will prove very useful when we study measurement contextuality in Section~\ref{sec:meascontext}.

\subsection{Qubits}
\label{sec:qubits}
While the WWM formalism for odd-dimensional qudits can be made with the two generators, \(\hat p\) and \(\hat q\), the WWM formalism for qubits requires three generators. This is because the translation operator forms a subgroup of \(SU(d)\) only for odd \(d\)~\cite{Bengtsson17}, and Clifford gates in any odd prime power dimension are unitary two-designs, while multi\emph{qubit} Clifford gates are also unitary three-designs~\cite{Zhu15,Wallman12}.

Let \(\xi_p\), \(\xi_q\) and \(\xi_r\) be three real generators of a Grassmann algebra \(\mathcal G_3\). Hence,
\begin{equation}
\xi_j \xi_k + \xi_k \xi_j \equiv \{\xi_j, \xi_k\} = 0,  \quad \text{for} \, j,k \in \{1,2,3\},
\end{equation}
where we can identify \(\xi_p \equiv \xi_1\), \(\xi_q \equiv \xi_2\) and \(\xi_r \equiv \xi_3\).

The three real generators can be treated as classical canonical variables:
\begin{equation}
  i \sum_{j} \left(\xi_k \frac{\cev \partial}{\partial \xi_j}\right) \left(\frac{\vec \partial}{\partial \xi_j} \xi_l\right) = \{\xi_k, \xi_l\}_{\text{P.B.}} = i \delta_{kl},
\end{equation}
where ``P. B.'' stands for the Poisson bracket and \(\frac{\cev \partial}{\partial \xi_j}\) and \(\frac{\vec \partial}{\partial \xi_j}\) are right and left derivatives respectively, as defined in~\cite{Kocia17_2}.

To quantize our algebra, we replace the Poisson brackets for the canonical variables by the anti-commutator multiplied by \(-i/\hbar\)~\cite{Berezin77}:
\begin{equation}
  \{\xi_k, \xi_l\}_{\text{P.B.}} \rightarrow \{\hat \xi_k, \hat \xi_l\} = \hbar \delta_{kl}.
\end{equation}
Renormalizing, we get the Clifford algebra with the three generators:
\begin{equation}
  \label{eq:GrassmannPauli}
  \hat \xi_k = \sqrt{\frac{\hbar}{2}} \hat \sigma_k.
\end{equation}
These \(\hat \sigma_k\) are the Pauli operators.

It can be shown that the operator
\begin{equation}
  \label{eq:Grassmanntranslationop}
  \hat T(\bs \rho) = \exp \left( \frac{2 i}{\hbar} \sum_k \hat \xi_k \rho_k\right)
\end{equation}
corresponds to a translation operator and the dual to the translation operator \(\hat T\) is:
\begin{equation}
  \hat R(\bs \xi) = \int \exp\left(- \frac{2 i}{\hbar} \sum_k \xi_k \rho'_k \right) \hat T(\bs \rho') \text d^3 \rho',
\end{equation}
which corresponds to a reflection (actually an inversion) operator.

As in the odd \(d\) case in Section~\ref{sec:odddqudits}, these reflections are parametrized by phase space \(\bs \xi \equiv (\xi_p, \xi_q, \xi_r)\), which here is made up of elements of the Grassmann algebra \(\mathcal G_3\) instead of \(\mathbb Z/ d \mathbb Z\).

These \(\hat R\) serve as a complete operator basis for any Hilbert space operator \(\hat g\) under Grassmann integration~\cite{Kocia17_2}:
\begin{equation}
  \label{eq:operatorGrassmannRdecomp}
  \hat g = \int \hat R(\bs \xi)  g(\bs \xi) \text d^3 \xi.
\end{equation}

Therefore, any operator \(\hat g\) can be expressed as a linear combination of \(\hat R\) operators parametrized by \(\bs \xi\). We identify \(g(\bs \xi)\) as the Weyl symbol of \(\hat g\), or the Wigner function if \(\hat g\) is a density matrix (\(\hat \rho\)). The Weyl symbol \(g(\bs \xi)\) corresponding to the operator \(\hat g\) can be equivalently represented by a sum of even or odd powers of Grassmann elements. Here we choose \(g(\bs \xi)\) to contain only even terms. Eq.~\ref{eq:operatorGrassmannRdecomp} means that the Weyl symbol (or Wigner function) \(g(\bs \xi)\) can be associated with the coefficients making up \(\hat g\)'s decomposition in the \(\hat R\) operator basis.

It remains to define measurement in this three-generator Grassmann WWM formalism. As in the odd \(d\) qudit case, let us begin by considering taking measurements of \(\bs \xi_p\), \(\bs \xi_q\) and now also \(\bs \xi_r\) separately. In the odd \(d\) case we could accomplish this by tracing away the other degrees of freedom. This is not possible for \(d=2\) because of the Grassmann elements. Unlike its two-generator analog, a three-generator Weyl symbol cannot generally produce scalar values after partial traces; it is a map to \(\mathcal G_3\) after all, not \(\mathbb R\). To produce a real value, a three-generator Weyl symbol must be traced over all of its three degrees of freedom.

Fortunately, we can appeal to what we discovered after taking partial traces in odd \(d\): partial traces in \(p\) or \(q\) are the same as a full trace over the Wigner functions of the associated eigenstates of \(\hat p\) and \(\hat q\). In other words, marginals can be obtained as a special case of expectation values.

For instance, for \(d=2\) we can find that the one-qubit state
\begin{equation}
  \hat \rho = \ketbra{\Psi}{\Psi} = \frac{1}{2} \left(1 + \left( \alpha i \hat \xi_r \hat \xi_q + \beta i \hat \xi_p \hat \xi_q + \gamma i \hat \xi_p \hat \xi_r \right) \right)
\end{equation}
has the corresponding Weyl symbol
\begin{equation}
  \label{eq:qubitWeylsymbol}
  \rho'(\xi) = \frac{1}{2} \left(1 + \left( \alpha i \xi_r \xi_q + \beta i \xi_p \xi_q + \gamma i \xi_p \xi_r \right) \right).
\end{equation}
Using the Grassmann integral equations, it is easy to see that taking the trace with the odd Weyl symbols of the \(\hat q\) eigenstates, \(q_{\pm}(\xi) = \frac{1}{2}(i\xi_p \xi_r \xi_q \pm \xi_q)\), produces
\begin{eqnarray}
  Q[\Psi](q) &=& 2 i \int \rho'(\xi) q_{\pm}(\xi) \text d \xi_r \text d \xi_q \text d \xi_p \nonumber\\
  &=&\begin{cases}\left|\Psi(0)\right|^2 = \frac{1}{2}(1+\alpha) & \text{for \(-\),}\\ \left|\Psi(1)\right|^2 = \frac{1}{2}(1-\alpha) & \text{for \(+\)}.\end{cases}
\end{eqnarray}
These results together provide the marginal \(|\Psi(q)|^2\).

In general, the expectation values of the projectors onto the eigenstates of an observable simply give the marginal distribution of that observable.

As before for odd \(d\), we can interpret measurements in terms of indicator functions on discrete phase space. In the example above, the indicator function is \(q_\pm(\xi)\)---the Weyl symbol of the orthogonal set of projectors of \(\hat Q\). This allows us to measure the ontic state \((\xi_p, \xi_q, \xi_r)\) as one would for a classical particle.

Again, as before for the odd \(d\) WWM formalism, we are interested in observables that are products of observables, and their \(\hbar\) expansion. We can begin with~\cite{Berezin77}:
\begin{eqnarray}
  \label{eq:qubitprodofweylsymb1}
  &&W_{A B}(\bs \xi)\\
&=& \left(\frac{\hbar}{2}\right)^3 \int \text d^3 \bs \xi_1 \text d^3 \bs \xi_2 e^{\frac{2}{\hbar} (\xi_1 \xi_2 + \xi_2 \xi + \xi \xi_1) } W_A(\bs \xi_1) W_B(\bs \xi_2), \nonumber
\end{eqnarray}
and let \(\Sigma = \xi_1 + \xi_2\) and \(\sigma = \xi_1 - \xi_2\) so that we can reexpress this equation in terms of a quadratic argument:
\begin{eqnarray}
  &&W_{A B}(\bs \xi) \nonumber\\
  &=& \left(\frac{\hbar}{2}\right)^3 \int \text d^3 \bs \Sigma \text d^3 \bs \sigma e^{\frac{1}{\hbar} \left[(\Sigma + \sigma)(\Sigma - \sigma) + \bs \xi \sigma \right]} \\
  && \qquad \qquad \times W_A\left(\frac{1}{2}(\bs \Sigma + \bs \sigma)\right) W_B\left(\frac{1}{2}(\bs \Sigma - \bs \sigma)\right) \nonumber\\
  &=& \left(\frac{\hbar}{2}\right)^3 \int \text d^3 \bs \Sigma \text d^3 \bs \sigma e^{\frac{1}{\hbar} \left(\bs \Sigma \bs \Sigma - \bs \sigma \bs \sigma + \bs \xi \bs \sigma \right)} \nonumber \\
  && \qquad \qquad \times W_A\left(\frac{1}{2}(\bs \Sigma + \bs \sigma)\right) W_B\left(\frac{1}{2}(\bs \Sigma - \bs \sigma)\right) . \nonumber
\end{eqnarray}
Evaluating this integral by stationary phase, we find the stationary points of the phase \(\phi \equiv \frac{1}{\hbar} \left[(\Sigma + \sigma)(\Sigma - \sigma) + \bs \xi \sigma \right]\) to be
\begin{equation}
  \frac{\vec \partial}{\partial \Sigma}\phi = 2 \sigma = 0 \Leftrightarrow \bs \xi_1 = \bs \xi_2,
\end{equation}
and
\begin{equation}
  \frac{\vec \partial }{\partial \sigma} \phi= 2 \Sigma + 2 \bs \xi = 0 \Leftrightarrow \bs \xi_1 = \bs \xi_2 = \bs \xi.
\end{equation}

The resultant two Grassmann Gaussian integrals~\cite{Berezin77,Kocia17_2} produce the prefactor
\begin{equation}
  \left(\sqrt{\det \left(2 \frac{1}{\hbar} \hat I_3 \right)} \right)^2,
\end{equation}
and so the order \(\hbar^0\) term consists of the prefactor multiplied by the full equation evaluated at the stationary phase point \(\bs \xi_1 = \bs \xi_2 = \bs \xi\):
\begin{equation}
  W_{A B}(\bs \xi) = W_A(\bs \xi) W_B(\bs \xi) + \mathcal O(\hbar).
\end{equation}

We leave a more detailed development of the WWM formalism to the literature~\cite{Berezin77,Almeida98,Rivas99,Rivas00,Kocia16,Kocia17_2} so as not to deviate from our focus on contextuality. To summarize, the main results of interest to us in the finite-dimensional WWM formalism are the Weyl symbol of observables which are products of observables, \(W_{A B}(\bs \zeta)\), and serve to define the contexts of a measurement. These can be expanded with respect to \(\hbar\) to produce the leading term:
\begin{equation}
  \label{eq:quditgroen}
  W_{A B}(\bs x) = W_A(\bs x) W_B(\bs x) + \mathcal O(\hbar),
\end{equation}
for odd \(d\), and
\begin{equation}
  \label{eq:qubitgroen}
  W_{A B}(\bs \xi) = W_A(\bs \xi) W_B(\bs \xi) + \mathcal O(\hbar),
\end{equation}
for \(d=2\). \(W_A(\bs \zeta)\) and \(W_B(\bs \zeta)\) correspond to the Weyl symbols of the observables \(\hat A\) and \(\hat B\) separately and \(W_{AB}(\bs \zeta)\) is the Weyl symbol of the product \(\hat A \hat B\) (where \(\hat A \hat B\) is an obervable iff. \([\hat A, \hat B]=0\)). In the classical limit (\(\hbar \rightarrow 0\)) the Weyl symbols of the products of observables become the product of the Weyl symbols of each operator separately. These identities will prove to be very illuminating for studying measurement contextuality in the next section.

\section{Measurement Contextuality}
\label{sec:meascontext}
Measurement contextuality was defined in the Introduction as the phenomenon described by a rank \(n \ge 2\) observable that can be measured jointly with either one of two other non-commuting observables---two contexts of measurement---and whose outcomes depend on this choice. Measurement contextuality can also be reexpressed in terms of a hidden variable theory (HVT): if the outcome of a measurement on a state computed by an HVT, which can be measured jointly with either one of two other non-commuting observables, is independent of which of the two other observables it is measured jointly with, then the measurement is non-contextual in that HVT.

The WWM formalism described for odd-dimensional qudits in Section~\ref{sec:odddqudits} and for qubits in Section~\ref{sec:qubits} is an example of an HVT. The ``hidden variables'' in this HVT are the Weyl phase space points, \(\bs x_p\) and \(\bs x_q\) in the odd-dimensional WWM formalism and the stabilizers in the qubit WWM formalism. Every state can be described by these hidden variables, namely by their support on these variables. Stabilizer states can be shown to have non-negative support on these hidden variables and so can be described by \emph{bona fide} classical probability distributions that propagate amongst themselves under Clifford gates~\cite{Kocia16,Kocia17}. This process can be described as one requiring only the \(\mathcal O(\hbar^0)\) terms in its path integral formalism in WWM, and so is preparation non-contextual. Here we turn our attention to measurement contextuality in the WWM formalism.

\begin{theorem}[Measurement Contextuality]
  \label{th:meascontext}
  A pure state \(\hat \rho\equiv \ketbra{\Psi}{\Psi}\) exhibits measurement contextuality under measurement by some observable \(\hat \Sigma\) under contexts \(\hat \Sigma \hat \Sigma_k\) if the Wigner function of the operators, \(W_{\Sigma \Sigma_k}\), must be treated at an order higher than \(\hbar^0\) to compute the expectation values:
    \begin{equation}
      \expval{\Psi}{\hat \Sigma \hat \Sigma_k }{\Psi} = \begin{cases}\int^\infty_{-\infty} W_{\Sigma \Sigma_k}(\bs \xi) \tilde W_{\rho}(\bs \xi) \text d \bs \xi. & \text{for \(d=2\)},\\
        \sum_{\bs x} W_{\Sigma \Sigma_k}(\bs x) W_{\rho}(\bs x)& \text{for odd \(d\)}.\end{cases}
    \end{equation}
\end{theorem}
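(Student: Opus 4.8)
The plan is to prove the equivalence by exploiting the HVT interpretation of the WWM formalism set up above, in which the phase-space points $\bs\zeta$ (the $\bs x=(\bs p,\bs q)$ for odd $d$ and the Grassmann $\bs\xi$ for $d=2$) are the ontic states and the Weyl symbol $W_\Sigma(\bs\zeta)$ of an observable is its value or indicator function on those states. The whole argument rests on identifying the order-$\hbar^0$ truncation of $W_{\Sigma\Sigma_k}$ with a \emph{context-independent} value assignment to $\hat\Sigma$, and the necessity of higher-order terms with the failure of any such assignment.

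First I would combine the expectation-value formula with Groenewold's rule, Eqs.~\ref{eq:quditgroen} and~\ref{eq:qubitgroen}, to write, at leading order,
\begin{equation}
  \expval{\Psi}{\hat\Sigma\hat\Sigma_k}{\Psi} = \int W_\Sigma(\bs\zeta)\,W_{\Sigma_k}(\bs\zeta)\,W_\rho(\bs\zeta)\,\text d\bs\zeta + \mathcal O(\hbar),
\end{equation}
with the sum replacing the integral for odd $d$. The crucial observation is that the $\hbar^0$ integrand factorizes: the outcome attributed to $\hat\Sigma$ at the ontic state $\bs\zeta$ is $W_\Sigma(\bs\zeta)$, which carries no dependence on the context label $k$. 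This is exactly the definition of non-contextuality given at the start of this section---the value of $\hat\Sigma$ is fixed by the ontic state alone, independent of the commuting partner it is co-measured with---so the $\hbar^0$ term describes a non-contextual HVT.

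Next, for the forward (contrapositive) direction, I would show that whenever the $\hbar^0$ truncation is sufficient the classical bound is automatically respected. Since the factorized integrand is a product of indicator functions $W_\Sigma W_{\Sigma_k}$ averaged against $W_\rho$, each ontic state contributes a deterministic product of outcomes that obeys the algebraic constraints on the $\hat\Pi_i$ (e.g.\ the orthogonality $\hat\Pi_i\hat\Pi_{i\oplus1}=0$). Summing a witness such as $\hat\Sigma_{\Gamma^2}$ then cannot exceed the classical maximum of two. Hence any state whose expectation value violates the classical bound cannot be reproduced by the $\hbar^0$ term alone, which is the assertion that contextuality forces higher-than-$\hbar^0$ contributions.

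For the converse I would argue that once the genuine $\mathcal O(\hbar)$ corrections to $W_{\Sigma\Sigma_k}$ are kept, the integrand no longer factorizes into a $k$-independent value for $\hat\Sigma$ times a value for $\hat\Sigma_k$: the Moyal cross terms couple the two symbols, so the effective outcome of $\hat\Sigma$ becomes context-dependent, and the expectation value is free to climb toward the quantum bound, e.g.\ the saturation by $\ket{\phi_3}$ of $5-\sqrt5$ for $\hat\Sigma_{\Gamma^2}$. The hard part will be making this converse airtight---ruling out the possibility that the $\mathcal O(\hbar)$ corrections could be silently reabsorbed into some redefined but still context-independent indicator for $\hat\Sigma$. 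To close this gap I would invoke the Kochen--Specker/Mermin no-go structure directly: the impossibility of a global $\{0,+1\}$ assignment satisfying all the $\hat\Pi_i\hat\Pi_{i\oplus1}=0$ constraints at once is precisely the obstruction to absorbing the higher-order terms into a classical indicator, so it is exactly the order-$\hbar$ terms that must carry the contextual content.
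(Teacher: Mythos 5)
Your first two paragraphs coincide with the paper's entire proof: the paper defines a non-contextual measurement as one whose indicator function for \(\hat\Sigma\) is independent of the context label \(k\), writes this as the factorized expectation value with integrand \(W_\Sigma(\bs\zeta)\,W_{\Sigma_k}(\bs\zeta)\,W_\rho(\bs\zeta)\), and observes via Groenewold's rule (Eqs.~\ref{eq:quditgroen} and~\ref{eq:qubitgroen}) that this factorized form is exactly the order-\(\hbar^0\) truncation of \(W_{\Sigma\Sigma_k}\). Non-contextuality is thereby identified with sufficiency of the \(\hbar^0\) term, and its failure with the necessity of higher orders; the paper stops there, and never argues about classical bounds inside the proof itself (those enter only in the examples).

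The extra material you add is where a genuine gap appears. In your ``forward'' direction you assert that, when the \(\hbar^0\) truncation suffices, each ontic state contributes a deterministic product of outcomes obeying the algebraic constraints on the \(\hat\Pi_i\), so that the witness \(\hat\Sigma_{\Gamma^2}\) cannot exceed two. But the Weyl symbols \(W_{\Pi_i}(\bs x)\) are not \(\{0,1\}\)-valued outcome assignments: the paper's appendix shows they take negative values (e.g.\ entries \(\approx -0.20\) in \(W_{\Pi_1}\)), the operator identity \(\hat\Pi_i\hat\Pi_{i\oplus 1}=0\) implies only \(W_{\Pi_i}(\bs x)\,W_{\Pi_{i\oplus 1}}(\bs x) = \mathcal O(\hbar)\) pointwise rather than exact vanishing, and \(W_\rho\) itself may be negative for non-stabilizer states. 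The classical bound therefore does not follow from your sketch---this is precisely the ``deformed probability calculus'' subtlety the paper flags, and it is why the paper defines (non-)contextuality directly as (failure of) factorization of the indicator functions rather than deriving bound-respecting behaviour from it. Relatedly, the ``hard part'' you identify in the converse---ruling out reabsorption of the \(\mathcal O(\hbar)\) corrections into a redefined context-independent indicator---does not arise in the paper's logic, since non-contextuality there refers to factorization with the actual Weyl symbols \(W_\Sigma, W_{\Sigma_k}\), so no appeal to the Kochen--Specker obstruction is needed. Note finally that by the paper's own Table~\ref{tab:KCSBexpvals} even the non-contextual states \(\ket{\phi_1},\ket{\phi_2}\) carry non-zero order-\(\hbar\) contributions, so a criterion of exact \(\hbar^0\) sufficiency would misclassify them; the operative criterion is whether those higher-order terms are required in order to exceed the classical bound.
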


\begin{proof}
  Let us consider the non-commuting observables, \(\hat \Sigma \hat \Sigma_k\), corresponding to the different contexts of measuring \(\hat \Sigma\). If the measurement is non-contextual then this means that the hidden variables predicting the expectation values of \(\hat \Sigma\) are the same whether \(\hat \Sigma\) is measured in the context with \(\hat \Sigma_k\) or with another \(\hat \Sigma_{k'}\). In other words, the indicator function for \(\hat \Sigma\) must be independent of the indicator function for \(\hat \Sigma_k\) and \(\hat \Sigma_{k'}\):
  \begin{widetext}
    \begin{equation}
      \expval{\Psi}{\hat \Sigma \hat \Sigma_k}{\Psi} =
      \label{eq:th1}
      \begin{cases}
      \int^\infty_{-\infty} W_{\Sigma \Sigma_k}(\bs \xi) \tilde W_{ \rho}(\bs \xi) \text d \bs \xi \underset{\text{non-context.}}{=} \int^\infty_{-\infty} W_{ \Sigma}(\bs \xi) W_{ \Sigma_k}(\bs \xi) \tilde W_{ \rho}(\bs \xi)  \text d \bs \xi & \text{for \(d=2\)},\\
      \sum_{\bs x} W_{ \Sigma  \Sigma_k}(\bs x) W_{ \rho}(\bs x) \underset{\text{non-context.}}{=} \sum_{\bs x} W_{ \Sigma}(\bs x) W_{ \Sigma_k}(\bs x) W_{ \rho}(\bs x) & \text{for odd \(d\)},
      \end{cases}
    \end{equation}
  \end{widetext}
for all given contexts \(k\).  By ``\(\underset{\text{non-context.}}{=}\)'' we denote equality given non-contextual measurements; the left-hand side equals the right-hand side given that the corresponding measurements are non-contextual. (We note that this neither implies that \(\expval{\Psi}{\hat \Sigma \hat \Sigma_k}{\Psi} = \expval{\Psi}{\hat \Sigma \hat \Sigma_{k'}}{\Psi}\) nor that \(\expval{\Psi}{\hat \Sigma \hat \Sigma_{k/k'}}{\Psi} = \expval{\Psi}{\hat \Sigma}{\Psi} \expval{\Psi}{\hat \Sigma_{k/k'}}{\Psi}\)).

  Eq.~\ref{eq:th1} is exactly equal to the \(\hbar^0\) limit of products of observables (Eq.~\ref{eq:quditgroen} for odd \(d\) qudits and Eq.~\ref{eq:qubitgroen} for qubits).

Therefore, only if the \(\hbar\) expansion of \(W_{\Sigma \Sigma_k}\) must be treated at an order higher than order \(\hbar^0\) is the associated measurement \(\expval{\Psi}{\hat \Sigma \hat \Sigma_{k}}{\Psi}\) contextual.\qed
\end{proof}

The approach used in the proof, relies on products of observables (Hermitian operators). This permits the relatively simple treatment presented here instead of using a more involved method likely involving a semiclassical treatment of the Lindblad equation to capture the non-unitary measurement process~\cite{Weinberg16}.

\section{Examples}

The two examples of state-independent contextuality for qubits and state-dependent contextuality for qutrits examined in the Introduction have become well known because of their particular simplicity. We reexamine them here with regards to their expansion in terms of \(\hbar\) and Theorem~\ref{th:meascontext}.

\subsection{Peres-Mermin Square}

It is possible to see the contextuality of the qubit Pauli operators in the Peres-Mermin square from the perspective of Theorem~\ref{th:meascontext}. Multiplying together any of the operators in a row or column corresponding to a context requires multiplying two Pauli operators \(\hat \sigma_1\) and \(\hat \sigma_2\) in each qubit tensor factor. For instance, in the first column we can assign \(\hat \sigma_1 = \hat X\) and \(\hat \sigma_2 = \hat X\) for the first qubit from the entries in the first and third rows, and we can assign \(\hat \sigma_1 = \hat Z\) and \(\hat \sigma_2 = \hat Z\) for the second qubit from the entries in the second and third rows (the identity matrices act trivially). Examining each qubit subspace separately, we find that the corresponding Weyl symbol of these observables is zero at order \(\hbar^0\) since the square of Grassmann elements is equal to zero:
\begin{widetext}
\begin{eqnarray}
  W_{\hat \sigma_a \hat \sigma_b}(\bs \xi) &=& \left( \alpha_1 i \xi_r \xi_q + \beta_1 i \xi_p \xi_q + \gamma_1 i \xi_p \xi_r \right) \left( \alpha_2 i \xi_r \xi_q + \beta_2 i \xi_p \xi_q + \gamma_2 i \xi_p \xi_r \right) + \mathcal O (\hbar)\\
                    &=& 0 + \mathcal O (\hbar). \nonumber
\end{eqnarray}
\end{widetext}

In this way, it is clear from Theorem~\ref{th:meascontext} that Pauli qubit operators are contextual for all states, as they all require the order \(\hbar^1\) term of the measurement operator.

Note that this is generally not true for odd \(d\). The product of two odd \(d\) Weyl symbols is only equal to zero if both are equal to zero; the result for qubits is a unique property dependent on the Grassmann algebra that underpins them. This means that odd \(d\) qudit observables will always have a finite order \(\hbar^0\) term as we shall see in the following examination of the KCSB construction.

\subsection{KCSB}

The fact that \(\hat \Sigma_{\Gamma^2}\) is a sum of products of observables allows us to directly make use of Theorem~\ref{th:meascontext} by examining its Weyl expectation value:
\begin{eqnarray}
  W_{\Sigma_{\Gamma^2}}(\bs x) &=& \sum_{\hat \Pi_i \hat \Pi_j \in \Gamma^2} W_{\Pi_i}(\bs x) W_{\Pi_j}(\bs x) + \mathcal O (\hbar) \nonumber\\
                               &\equiv& W_{\Sigma_{\Gamma^2}}^{\hbar^0}(\bs x) + W_{\Sigma_{\Gamma^2}}^{\hbar}(\bs x), \nonumber
\end{eqnarray}
where all order \(\hbar^0\) terms are defined to be in \(W_{\Sigma_{\Gamma^2}}^{\hbar^0}(\bs x)\) and all higher order terms in \(W_{\Sigma_{\Gamma^2}}^{\hbar}(\bs x)\).

These expectation values of \(\hat \Sigma_{\Gamma^2}\) and its order \(\hbar^0\) and \(\hbar\) parts are given in Table~\ref{tab:KCSBexpvals}, evaluated with all the qutrit stabilizer states. From this table we find that the expectation values of the two states \(\ket{\phi_1}\) and \(\ket{\phi_2}\) with \(\hat \Sigma_{\Gamma^2}\) are largely captured at order \(\hbar^0\) while the third eigenstate of \(\hat \Sigma_\Gamma\), \(\ket{\phi_3}\), has an order \(\hbar^1\) term that is dominant. Indeed, it is even greater than the classical bound and is necessary for \(\ket{\phi_3}\)'s expectation value with \(\hat \Sigma_{\Gamma^2}\) to surpass its classical bound. As a result, by Theorem~\ref{th:meascontext}, we conclude that the first two states exhibit measurement non-contextuality while the third exhibits measurement contextuality under the KCSB construction. This agrees with the conclusion reached from traditional outcome assigment argument presented in the Introduction.

\begin{table*}[ht]
  \begin{tabular}{|c|c|c|c|}
    \hline
    \(\ket \phi\) & \(\sum_{\bs x} W_\phi(\bs x) W_{\Sigma_{\Gamma^2}}(\bs x)\) & \(\sum_{\bs x} W_\phi(\bs x) W_{\Sigma_{\Gamma^2}}^{\hbar^0}(\bs x)\) & \(\sum_{\bs x} W_\phi(\bs x) W_{\Sigma_{\Gamma^2}}^{\hbar}(\bs x)\)\\
    \hline
    \(\ket{\phi_1}\) & \(5-2 \sqrt{5}\) & \(\frac{1}{12} \left(25-9 \sqrt{5}\right)\) & \(\frac{5}{12} \left(7-3 \sqrt{5}\right) \approx 0.12\)\\
    \(\ket{\phi_2}\) & \(5-2 \sqrt{5}\) & \(\frac{1}{12} \left(25-9 \sqrt{5}\right)\) & \(\frac{5}{12} \left(7-3 \sqrt{5}\right) \approx 0.12\)\\
    \(\ket{\phi_3}\) & \(5-\sqrt{5}\) & \(\frac{1}{6} \left(5-\sqrt{5}\right)\) & \(\frac{5}{6} \left(5-\sqrt{5}\right) \approx \bs{2.30}\)\\
    \(\frac{1}{\sqrt{3}}\left(\ket{\phi_1} + \ket{\phi_2} + \ket{\phi_3} \right)\) & \(5-\frac{5 \sqrt{5}}{3}\) & \(\frac{43}{18}-\frac{5 \sqrt{5}}{6}\) & \(\frac{47}{18}-\frac{5 \sqrt{5}}{6} \approx 0.75\)\\
    \(\frac{1}{\sqrt{3}}\left(\ket{\phi_1} + e^{\frac{2 \pi}{3} i}\ket{\phi_2} + e^{\frac{4 \pi}{3} i}\ket{\phi_3} \right)\) & \(5-\frac{5 \sqrt{5}}{3}\) & \(\frac{1}{36} \left(47-15 \sqrt{5}\right)\) & \(\frac{133}{36}-\frac{5 \sqrt{5}}{4} \approx 0.90\)\\
    \(\frac{1}{\sqrt{3}}\left(\ket{\phi_1} + e^{\frac{4 \pi}{3} i}\ket{\phi_2} + e^{\frac{2 \pi}{3} i}\ket{\phi_3} \right)\) & \(5-\frac{5 \sqrt{5}}{3}\) & \(\frac{1}{36} \left(47-15 \sqrt{5}\right)\) & \(\frac{133}{36}-\frac{5 \sqrt{5}}{4} \approx 0.90\)\\
    \(\frac{1}{\sqrt{3}}\left(\ket{\phi_1} + e^{\frac{4 \pi}{3} i}\ket{\phi_2} + e^{\frac{4 \pi}{3} i}\ket{\phi_3} \right)\) & \(5-\frac{5 \sqrt{5}}{3}\) & \(\frac{1}{36} \left(65-21 \sqrt{5}\right)\) & \(\frac{1}{36} \left(115-39 \sqrt{5}\right) \approx 0.77\)\\
    \(\frac{1}{\sqrt{3}}\left(e^{\frac{4 \pi}{3} i}\ket{\phi_1} + e^{\frac{4 \pi}{3} i}\ket{\phi_2} + \ket{\phi_3} \right)\) & \(5-\frac{5 \sqrt{5}}{3}\) & \(\frac{1}{18} \left(25-9 \sqrt{5}\right)\) & \(\frac{65}{18}-\frac{7 \sqrt{5}}{6} \approx 1.00\)\\
    \(\frac{1}{\sqrt{3}}\left(e^{\frac{4 \pi}{3} i}\ket{\phi_1} + \ket{\phi_2} + e^{\frac{4 \pi}{3} i}\ket{\phi_3} \right)\) & \(5-\frac{5 \sqrt{5}}{3}\) & \(\frac{1}{36} \left(65-21 \sqrt{5}\right)\) & \(\frac{1}{36} \left(115-39 \sqrt{5}\right) \approx 0.77\)\\
    \(\frac{1}{\sqrt{3}}\left(e^{\frac{2 \pi}{3} i}\ket{\phi_1} + e^{\frac{2 \pi}{3} i}\ket{\phi_2} + \ket{\phi_3} \right)\) & \(5-\frac{5 \sqrt{5}}{3}\) & \(\frac{1}{18} \left(25-9 \sqrt{5}\right)\) & \(\frac{65}{18}-\frac{7 \sqrt{5}}{6} \approx 1.00\)\\
    \(\frac{1}{\sqrt{3}}\left(\ket{\phi_1} + e^{\frac{2 \pi}{3} i}\ket{\phi_2} + e^{\frac{2 \pi}{3} i}\ket{\phi_3} \right)\) & \(5-\frac{5 \sqrt{5}}{3}\) & \(\frac{1}{36} \left(65-21 \sqrt{5}\right)\) & \(\frac{1}{36} \left(115-39 \sqrt{5}\right) \approx 0.77\)\\
    \(\frac{1}{\sqrt{3}}\left(e^{\frac{2 \pi}{3} i}\ket{\phi_1} + \ket{\phi_2} + e^{\frac{2\pi}{3} i}\ket{\phi_3} \right)\) & \(5-\frac{5 \sqrt{5}}{3}\) & \(\frac{1}{36} \left(65-21 \sqrt{5}\right)\) & \(\frac{1}{36} \left(115-39 \sqrt{5}\right) \approx 0.77\)\\
    \hline
  \end{tabular}
  \caption{\(\hat \Sigma_{\Gamma^2}\) expectation values of the stabilizer states indicated by the first row. The full (exact) expectation value is given in the second column, the expectation value up to order \(\hbar^0\) is given in the third column, and the difference (\(\sum_{\bs x} W_\phi(\bs x) W_{\Sigma_{\Gamma^2}}^{\hbar}(\bs x)\)) of the two, corresponding to the contribution of order \(\hbar\) to the expectation value is given in the fourth column. The order \(\hbar\) contribution in the third row, corresponding to the state \(\ket{\phi_2}\), is shown in bold to highlight that the contribution of order \(\hbar\) to its expectation value is greater than the classical bound on the expectation value.}
  \label{tab:KCSBexpvals}
\end{table*}

\section{Discussion}

In the example involving the Peres-Mermin square, we found that the state-dependent measurement contextuality could be attributed to the fact that qubit Pauli observables have no order \(\hbar^0\) term. The lack of the order \(\hbar^0\) term is unique to \(d=2\). By contradistinction, odd-dimensional qubit operators often have a finite order \(\hbar^0\) term. Indeed, this is because a zero order \(\hbar^0\) term is impossible in the WWM formalism with the algebra used for odd \(d\) qudits (for non-zero operators). It is uniquely due to the Grassmann algebra required for qubit WWM that permits the leading order \(\hbar^0\) term to be zero. Indeed, the likely reason for why the relationship between orders of \(\hbar\) and measurement contextuality wasn't noticed, as far as we can tell from the literature, is that the qubit WWM Grassmann formalism was only recently fully formulated~\cite{Berezin77,Kocia17_2}.

The significance of a zero order \(\hbar^0\) term for qubit Pauli observables under WWM is clear: the (pseudo-) classical limit of qubit Pauli observables is the null operator. In other words, there are no classical analogues to qubit Pauli measurements. This should not be too surprising---it has long been noticed that there is no classical analogue to spin-\(\frac{1}{2}\)~\cite{Berezin77}.

This dichotomy in the possibilities of the magnitude of the order \(\hbar^0\) term between qubits and qudits is the reason that qubits exhibit state-\emph{independent} contextuality while odd \(d\) qudits exhibit state-\emph{dependent} contextuality. Colloquially, it can be said that the terms that are higher order than \(\hbar^0\) are responsible for getting a state's expectation value with an observable to be higher than its classical bound. In the qubit case, since the lower order \(\hbar^0\) term is always zero, these higher order terms always dominate and allow qubit Pauli observables to always exhibit contextuality, regardless of the state being measured. In the odd \(d\) qudit case, constructions can be made that favor particular states with large contributions from their higher order terms such that they exhibit contextuality within that construction, whereas other states don't. However, as seen in the KCSB example, these ``non-contextual'' states still require their higher than order \(\hbar^0\) terms to attain their particular ``classical'' expectation values. They are still quantum animals in this respect, but since they don't distinguish themselves with respect to a constructed classical bound, they are said not to exhibit contextuality with that particular measurement in that particular construction. Contextuality is present but insufficient to violate a particular inequality demonstrating non-classicality. This is analogous to the existence of entangled states that nevertheless cannot violate a Bell inequality.

Moreover, the fact that even though every observable has higher than order \(\hbar^0\) terms that are non-zero but all states are not measurement contextual under some constructions reconciles how it is possible for a quasi-probability representation---such as the WWM---to exhibit negativity for either states or measurements or make use of a deformed probability calculus~\cite{Ferrie08}, while still exhibiting measurement non-contextuality in a particular construction. Though the higher order \(\hbar\) quantum corrections introduce a deformed algebra into WWM's measurement indicator functions for the KCSB construction, they are not a large enough deformation for some states to exhibit expectation values that violate their classical bounds.

Fundamentally, these results show that the measurement contextuality as a resource is the same as orders of \(\hbar\) as a resource. Contextuality is a resource that is necessary for universal quantum computation just like orders of \(\hbar\) higher than \(\hbar^0\) are necessary for quantum phenomena that are richer than their classical counterparts. As mentioned, a similar case for preparation contextuality has also been made~\cite{Howard14} with regard to Clifford operations on stabilizer states which have been shown to only require order \(\hbar^0\) terms in their WWM path integral treatment, while extensions that allow for universal quantum computing require order \(\hbar^1\) terms and higher~\cite{Kocia16}.

To drive this point home, let us examine one more example involving the last remaining manifestation of measurement contextuality: state-independent contextuality for a qutrit.

\subsection{Thirteen Ray Qutrit State-Independent Construction}
\label{sec:13rays}
\begin{figure}[ht]
  \includegraphics[scale=1.0]{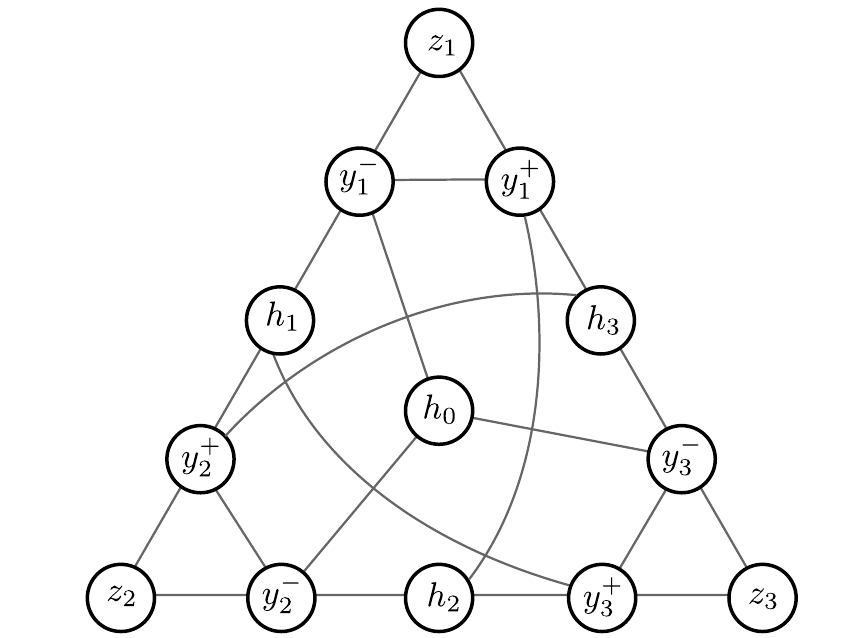}
  \caption{Orthogonality graph of the \(13\) ray qutrit construction.}
  \label{fig:13raysorthograph}
\end{figure}

Consider the following \(13\) vectors or ``rays''~\cite{Yu12}:
\begin{equation}
  \label{eq:13rays}
  \begin{array}{lll}
    y_1^- = (0,1,-1) & h_1 = (-1,1,1) & z_1 = (1,0,0)\\
    y_2^- = (1,0,-1) & h_2 = (1,-1,1) & z_2 = (0,1,0)\\
    y_3^- = (1,-1,0) & h_3 = (1,1,-1) & z_3 = (0,0,1)\\
    y_1^+ = (0,1,1) & h_0 = (1,1,1) & \\
    y_2^+ = (1,0,1) & & \\
    y_3^+ = (1,1,0)& & \\
    \end{array}
  \end{equation}
  The orthogonality of these rays is indicated in Fig.~\ref{fig:13raysorthograph}, where vertices that share an edge (i.e. are joined) indicate that their corresponding rays are orthogonal to each other.

  We let \(V\) denote the set of these rays, such that \(V = \{y_k^\sigma, h_\alpha, z_k | k = 1,2,3; \, \sigma = \pm; \, \alpha = 0,1,2,3\}\), and \(\Gamma\) be the \(13 \times 13\) symmetric adjacency matrix with vanishing diagonals, such that \(\Gamma_{\mu\nu} = 1\) if the two rays \(\mu, \, \nu \in V\) are neighbors and \(\Gamma_{\mu \nu} = 0\) otherwise.

  It follows that for an arbitrary set of \(13\) dichotomic observables, \(\{A_\nu | \nu \in V\}\), each of which takes values (i.e. has outcomes) \(a_\nu^\lambda = \pm 1\) for an ontic state \(\lambda\),
  \begin{equation}
    \label{eq:13rayclinequality}
  \sum_{\nu \in V} \left<A_\nu\right>_{\text{cl}} - \frac{1}{4}\sum_{\mu, \nu \in V} \Gamma_{\mu \nu} \left< A_\mu A_\nu\right>_{\text{cl}} \le 8,
\end{equation}
where \(\left< A_\nu\right>_{\text{cl}} \equiv \int \text d \lambda \rho_\lambda a_\nu^\lambda\) and \(\left<A_\mu A_\nu\right>_{\text{cl}} \equiv \int \text d \lambda \rho_\lambda a_\mu^\lambda a_\nu^\lambda\) for a classical density \(\rho_\lambda\).

Converting this scenario to quantum mechanics, we define \(\hat A_\nu = \hat I - 2 \hat r_\nu\) from the projectors onto the \(13\) rays \(\hat r_\nu \in \{\hat y_k^\sigma, \hat h_\alpha, \hat z_k\}\) (which are outer products of the vectors in Eq.~\ref{eq:13rays}). Observe that \(\hat A_\mu\) and \(\hat A_\nu\) commute whenever their associated rays are orthogonal, i.e., \(\Gamma_{\mu \nu} = 1\). It follows that the quantum analogue of Eq.~\ref{eq:13rayclinequality} is:
\begin{equation}
  \sum_{\nu \in V} \hat A_\nu - \frac{1}{4} \sum_{\mu, \nu \in V} \Gamma_{\mu \nu} \hat A_\mu \hat A_\nu = \frac{25}{3} \hat I.
\end{equation}
Observe that \(\frac{25}{3} > 8\) and so the expectation value of the above equation with any state is always greater than its classical version in Eq.~\ref{eq:13rayclinequality}. This inequality therefore demonstrates state-independent contextuality for one qutrit under this construction.

It is also possible to show state-independent contextuality in this system by considering the simpler inequality:
\begin{equation}
  \label{eq:13rayclinequality2}
  \sum_{\alpha=0}^3 \left< \hat h_\alpha \right>_{\text{cl}} \le 1,
\end{equation}
where \(\left< \hat h_\alpha \right>_{\text{cl}} = \int \text d \lambda \rho_\lambda h_\alpha^\lambda\) for \(h^\lambda_\alpha \in \mathbb R\) since it corresponds to a projector outcome. Unlike the previous inequality in Eq.~\ref{eq:13rayclinequality}, this one additionally relies on the algebraic structure of compatible observables to be preserved classically, i.e., the sum and product rules. In particular, this inequality can be shown to be satisfied based on the following two conditions~\cite{Yu12}:
\begin{enumerate}
\item The value \(\{0,1\}\) assigned to a ray is independent of which bases it finds itself in and
\item One and only one ray is assigned a value of \(1\) among all the rays in a complete orthonormal basis.
\end{enumerate}

This leads to the following two conclusions which together imply the inequality in Eq.~\ref{eq:13rayclinequality2}:
\begin{enumerate}
\item if \(\hat h_0\) and \(\hat h_1\) are assigned to value \(1\) then \(\hat y_2^\pm\) and \(\hat y_3^\pm\) must be assigned to \(0\) so that both \(\hat z_2\) and \(\hat z_3\) must be assigned to value \(1\) which is impossible, and
\item if \(\hat h_1\) and \(\hat h_2\) are assigned to \(1\) then \(\hat y_1^\pm\) and \(\hat y_2^\pm\) must be zero so that both \(\hat z_1\) and \(\hat z_2\) must be assigned to value \(1\), also a contradition.
\end{enumerate}

However, in quantum mechanics,
\begin{equation}
  \hat \Sigma_h = \sum_{\alpha=0}^3 \hat h_\alpha = \frac{4}{3} \hat I,
\end{equation}
and so the expectation value with any quantum state is always \(\frac{4}{3} > 1\), which is greater than the classical bound in Eq.~\ref{eq:13rayclinequality2}, thereby demonstrating state-independent contextuality.

To use this simpler inequality with Theorem~\ref{th:meascontext}, we need to transform it into a sum of bilinear products of observables. This is accomplished easily enough by squaring it:
\begin{equation}
  \label{eq:hcontextwitness2}
  \hat \Sigma_{h^2} = \left(\sum_{\alpha=0}^3 \hat h_\alpha\right)^2 = \frac{16}{9} \hat I.
\end{equation}
This remains a state-independent contextuality witness since its classical upper bound is still \(1\) while the quantum expectation value with any quantum state is \(\frac{16}{9} > 1\).

The order \(\hbar^0\) contribution to the Weyl expectation value of Eq.~\ref{eq:hcontextwitness2} with any qutrit stabilizer state \(\phi\) reveals:
\begin{equation}
  \sum_{\bs x} W^{\hbar^0}_{\Sigma_{h^2}}(\bs x) W_{\phi}(\bs x) = \frac{16}{27} \approx 0.592593,
\end{equation}
which is less than or equal to \(1\), as expected. Since all stabilizer states have the same expectation value, and any other state can be expressed as a linear combination of an orthogonal set of stabilizer states, it follows that every quantum state has the same order \(\hbar^0\) contribution.

As a result, the greater than order \(\hbar^0\) contribution to this expectation value for \emph{all} quantum states is \(\frac{16}{9} - \frac{16}{27} = \frac{32}{27} > 1\), which is a contribution that is not only greater than the order \(\hbar^0\) contribution, but is also larger than the largest possible classical value of \(1\) (the classical bound). Moreover, this contribution must be included in order for the state's Weyl expectation value to violate its classical bound. In this way, Th.~\ref{th:meascontext} shows that the \(13\) ray construction is an illustration of state-independent measurement contextuality for a qutrit system.

\section{Conclusion}

Using the WWM formalism to develop an expansion with respect to \(\hbar\) for products of observables, we showed that states exhibit measurement contextuality if and only if their expectation value for the measurement has a non-zero term higher than order \(\hbar^0\), which must be included in order for the observable to violate a classical bound. This allowed us to show that qubits exhibit state-independent measurement contextuality in the Peres-Mermin square because the associated measurements have no order \(\hbar^0\) term. As a result, any state's expectation value with these observables' associated Weyl symbols must include terms order \(\hbar^1\) or higher and so manifestly violate any constructed classical bound. Conversely, odd-dimensional qudit measurement operators often have have non-zero order \(\hbar^0\) terms and so exhibit state-dependent measurement contextuality. Only if an odd-dimensional qudit state's expectation value with an observable requires the order \(\hbar^1\) term for it to surpass its classical bounds, does it exhibits measurement contextuality.

With this last development, the formal equivalence between contextuality and order of \(\hbar\) is complete; in both the case of preparation and measurement contextuality, states exhibit contextuality under an operator (unitary or Hermitian, respectively) if it must be treated at higher than order \(\hbar^0\) order to capture the result (an evolution and a measurement outcome respectively). Contextuality as a resource is equivalent to orders of \(\hbar\) as a resource. The often far-flung studies of contextuality and semiclassics (roughly the study of the importance of higher order expansions of \(\hbar\)) are in fact concerned with an equivalent phenomenon.

\acknowledgments
This work was supported by AFOSR award no. FA9550-12-1-0046. Parts of this manuscript are a contribution of NIST, an agency of the US government, and are not subject to US copyright.

\bibliography{biblio}{}
\bibliographystyle{unsrt}
\pagebreak
\section*{Appendix: Weyl Symbols for Operators in the KCSB Construction}
We list here the Weyl symbols of the projectors \(\hat \Pi_i\) and the stabilizer states \(\ketbra {\phi}{\phi}\) used in Table~\ref{tab:KCSBexpvals}.

\begin{table}[ht]
  \begin{equation*}
    \ket \phi = \ket{\phi_1}:
  \end{equation*}
  \begin{tabular}{|c|c|c|c|}
    \hline
    \(W_{\phi}(x_p, x_q)\) & \(x_q=0\) & \(x_q=1\) & \(x_q=2\)\\
    \hline
    \(x_p=0\) & \(\frac{1}{3}\)& \(0\)& \(0\)\\
    \(x_p=1\) & \(\frac{1}{3}\)& \(0\)& \(0\)\\
    \(x_p=2\) & \(\frac{1}{3}\)& \(0\)& \(0\)\\
    \hline
  \end{tabular}
  \begin{equation*}
    \ket \phi = \ket{\phi_2}:
  \end{equation*}
  \begin{tabular}{|c|c|c|c|}
    \hline
    \(W_{\phi}(x_p, x_q)\) & \(x_q=0\) & \(x_q=1\) & \(x_q=2\)\\
    \hline
    \(x_p=0\) & \(0\) & \(\frac{1}{3}\)& \(0\)\\
    \(x_p=1\) & \(0\) & \(\frac{1}{3}\)& \(0\)\\
    \(x_p=2\) & \(0\) & \(\frac{1}{3}\)& \(0\)\\
    \hline
  \end{tabular}
  \begin{equation*}
    \ket \phi = \ket{\phi_3}:
  \end{equation*}
  \begin{tabular}{|c|c|c|c|}
    \hline
    \(W_{\phi}(x_p, x_q)\) & \(x_q=0\) & \(x_q=1\) & \(x_q=2\)\\
    \hline
    \(x_p=0\) & \(0\)& \(0\) & \(\frac{1}{3}\)\\
    \(x_p=1\) & \(0\)& \(0\) & \(\frac{1}{3}\)\\
    \(x_p=2\) & \(0\)& \(0\) & \(\frac{1}{3}\)\\
    \hline
  \end{tabular}
  \begin{equation*}
    \ket \phi = \frac{1}{\sqrt{3}}\left(\ket{\phi_1} + \ket{\phi_2} + \ket{\phi_3} \right):
  \end{equation*}
  \begin{tabular}{|c|c|c|c|}
    \hline
    \(W_{\phi}(x_p, x_q)\) & \(x_q=0\) & \(x_q=1\) & \(x_q=2\)\\
    \hline
    \(x_p=0\) & \(0\)& \(0\) & \(0\)\\
    \(x_p=1\) & \(0\)& \(0\) & \(0\)\\
    \(x_p=2\) & \(\frac{1}{3}\)& \(\frac{1}{3}\) & \(\frac{1}{3}\)\\
    \hline
  \end{tabular}
  \begin{equation*}
    \ket \phi = \frac{1}{\sqrt{3}}\left(\ket{\phi_1} + e^{\frac{2 \pi}{3} i}\ket{\phi_2} + e^{\frac{4 \pi}{3} i}\ket{\phi_3} \right):
  \end{equation*}
  \begin{tabular}{|c|c|c|c|}
    \hline
    \(W_{\phi}(x_p, x_q)\) & \(x_q=0\) & \(x_q=1\) & \(x_q=2\)\\
    \hline
    \(x_p=0\) & \(0\)& \(0\) & \(0\)\\
    \(x_p=1\) & \(\frac{1}{3}\)& \(\frac{1}{3}\) & \(\frac{1}{3}\)\\
    \(x_p=2\) & \(0\)& \(0\) & \(0\)\\
    \hline
  \end{tabular}
  \begin{equation*}
    \ket \phi = \frac{1}{\sqrt{3}}\left(\ket{\phi_1} + e^{\frac{4 \pi}{3} i}\ket{\phi_2} + e^{\frac{2 \pi}{3} i}\ket{\phi_3} \right):
  \end{equation*}
  \begin{tabular}{|c|c|c|c|}
    \hline
    \(W_{\phi}(x_p, x_q)\) & \(x_q=0\) & \(x_q=1\) & \(x_q=2\)\\
    \hline
    \(x_p=0\) & \(0\)& \(0\) & \(0\)\\
    \(x_p=1\) & \(0\)& \(0\) & \(0\)\\
    \(x_p=2\) & \(\frac{1}{3}\)& \(\frac{1}{3}\) & \(\frac{1}{3}\)\\
    \hline
  \end{tabular}
  \label{tab:weylsymbolsofstabstates1}
  \caption{Weyl Symbols of first six stabilizer states in Table~\ref{tab:KCSBexpvals}.}
\end{table}

\begin{table}[ht]
  \begin{equation*}
    \ket \phi = \frac{1}{\sqrt{3}}\left(\ket{\phi_1} + e^{\frac{4 \pi}{3} i}\ket{\phi_2} + e^{\frac{4 \pi}{3} i}\ket{\phi_3} \right):
  \end{equation*}
  \begin{tabular}{|c|c|c|c|}
    \hline
    \(W_{\phi}(x_p, x_q)\) & \(x_q=0\) & \(x_q=1\) & \(x_q=2\)\\
    \hline
    \(x_p=0\) & \(\frac{1}{3}\)& \(0\)& \(0\)\\
    \(x_p=1\) & \(0\) & \(\frac{1}{3}\)& \(0\)\\
    \(x_p=2\) & \(0\) & \(0\) & \(\frac{1}{3}\)\\
    \hline
  \end{tabular}
  \begin{equation*}
    \ket \phi = \frac{1}{\sqrt{3}}\left(e^{\frac{4 \pi}{3} i}\ket{\phi_1} + e^{\frac{4 \pi}{3} i}\ket{\phi_2} + \ket{\phi_3} \right):
  \end{equation*}
  \begin{tabular}{|c|c|c|c|}
    \hline
    \(W_{\phi}(x_p, x_q)\) & \(x_q=0\) & \(x_q=1\) & \(x_q=2\)\\
    \hline
    \(x_p=0\) & \(0\) & \(0\)& \(\frac{1}{3}\)\\
    \(x_p=1\) & \(\frac{1}{3}\) & \(0\)& \(0\)\\
    \(x_p=2\) & \(0\) & \(\frac{1}{3}\)& \(0\)\\
    \hline
  \end{tabular}
  \begin{equation*}
    \ket \phi = \frac{1}{\sqrt{3}}\left(e^{\frac{4 \pi}{3} i}\ket{\phi_1} + \ket{\phi_2} + e^{\frac{4 \pi}{3} i}\ket{\phi_3} \right):
  \end{equation*}
  \begin{tabular}{|c|c|c|c|}
    \hline
    \(W_{\phi}(x_p, x_q)\) & \(x_q=0\) & \(x_q=1\) & \(x_q=2\)\\
    \hline
    \(x_p=0\) & \(0\)& \(\frac{1}{3}\) & \(0\)\\
    \(x_p=1\) & \(0\)& \(0\) & \(\frac{1}{3}\)\\
    \(x_p=2\) & \(\frac{1}{3}\)& \(0\) & \(0\)\\
    \hline
  \end{tabular}
  \begin{equation*}
    \ket \phi = \frac{1}{\sqrt{3}}\left(e^{\frac{2 \pi}{3} i}\ket{\phi_1} + e^{\frac{2 \pi}{3} i}\ket{\phi_2} + \ket{\phi_3} \right):
  \end{equation*}
  \begin{tabular}{|c|c|c|c|}
    \hline
    \(W_{\phi}(x_p, x_q)\) & \(x_q=0\) & \(x_q=1\) & \(x_q=2\)\\
    \hline
    \(x_p=0\) & \(0\)& \(0\) & \(\frac{1}{3}\)\\
    \(x_p=1\) & \(0\)& \(\frac{1}{3}\) & \(0\)\\
    \(x_p=2\) & \(\frac{1}{3}\)& \(0\) & \(0\)\\
    \hline
  \end{tabular}
  \begin{equation*}
    \ket \phi = \frac{1}{\sqrt{3}}\left(\ket{\phi_1} + e^{\frac{2 \pi}{3} i}\ket{\phi_2} + e^{\frac{2 \pi}{3} i}\ket{\phi_3} \right):
  \end{equation*}
  \begin{tabular}{|c|c|c|c|}
    \hline
    \(W_{\phi}(x_p, x_q)\) & \(x_q=0\) & \(x_q=1\) & \(x_q=2\)\\
    \hline
    \(x_p=0\) & \(\frac{1}{3}\)& \(0\) & \(0\)\\
    \(x_p=1\) & \(0\)& \(0\) & \(\frac{1}{3}\)\\
    \(x_p=2\) & \(0\)& \(\frac{1}{3}\) & \(0\)\\
    \hline
  \end{tabular}
  \begin{equation*}
    \ket \phi = \frac{1}{\sqrt{3}}\left(e^{\frac{2 \pi}{3} i}\ket{\phi_1} + \ket{\phi_2} + e^{\frac{2\pi}{3} i}\ket{\phi_3} \right):
  \end{equation*}
  \begin{tabular}{|c|c|c|c|}
    \hline
    \(W_{\phi}(x_p, x_q)\) & \(x_q=0\) & \(x_q=1\) & \(x_q=2\)\\
    \hline
    \(x_p=0\) & \(0\)& \(\frac{1}{3}\) & \(0\)\\
    \(x_p=1\) & \(\frac{1}{3}\)& \(0\) & \(0\)\\
    \(x_p=2\) & \(0\)& \(0\) & \(\frac{1}{3}\)\\
    \hline
  \end{tabular}
  \label{tab:weylsymbolsofstabstates2}
  \caption{Weyl Symbols of last six stabilizer states in Table~\ref{tab:KCSBexpvals}.}
\end{table}

\begin{table*}[!htp]
  \begin{tabular}{|c|c|c|c|}
    \hline
    \(W_{\Pi_1}(x_p, x_q)\) & \(x_q=0\) & \(x_q=1\) & \(x_q=2\)\\
    \hline
    \(x_p=0\) & \(\frac{1}{60} \left(\sqrt{5}+4 \sqrt{5 \left(3 \sqrt{5}-5\right)}+5\right) \approx 0.32\)& \(\frac{1}{60} \left(-5 \sqrt{5}-4 \sqrt{5 \left(\sqrt{5}+1\right)}+15\right) \approx -0.20 \)& \(\frac{1}{15} \left(\sqrt{5}-5 \sqrt{\frac{2}{\sqrt{5}+5}}\right) \approx -0.03\)\\
    \(x_p=1\) & \(\frac{1}{60} \left(\sqrt{5}-2 \sqrt{5 \left(3 \sqrt{5}-5\right)}+5\right) \approx 0.02\)& \(\frac{1}{60} \left(-5 \sqrt{5}+2 \sqrt{5 \left(\sqrt{5}+1\right)}+15\right) \approx 0.20\)& \(\frac{\sqrt{10-2 \sqrt{5}}+4}{12 \sqrt{5}} \approx 0.24\)\\
    \(x_p=2\) & \(\frac{1}{60} \left(\sqrt{5}-2 \sqrt{5 \left(3 \sqrt{5}-5\right)}+5\right) \approx 0.02\)& \(\frac{1}{60} \left(-5 \sqrt{5}+2 \sqrt{5 \left(\sqrt{5}+1\right)}+15\right) \approx 0.20 \)& \(\frac{\sqrt{10-2 \sqrt{5}}+4}{12 \sqrt{5}} \approx 0.24\)\\
    \hline
  \end{tabular}

  \begin{tabular}{|c|c|c|c|}
    \hline
    \(W_{\Pi_2}(x_p, x_q)\) & \(x_q=0\) & \(x_q=1\) & \(x_q=2\)\\
    \hline
    \(x_p=0\) & \(\frac{1}{15} \left(5-\sqrt{5}\right) \approx 0.18\)& \(\frac{2}{3} \sqrt{\frac{1}{\sqrt{5}}-\frac{1}{5}} \approx 0.33\)& \(\frac{1}{3 \sqrt{5}} \approx 0.15\)\\
    \(x_p=1\) & \(\frac{1}{15} \left(5-\sqrt{5}\right) \approx 0.18\)& \(-\frac{1}{3} \sqrt{\frac{1}{\sqrt{5}}-\frac{1}{5}} \approx -0.17\)& \(\frac{1}{3 \sqrt{5}} \approx 0.15\)\\
    \(x_p=2\) & \(\frac{1}{15} \left(5-\sqrt{5}\right) \approx 0.18\)& \(-\frac{1}{3} \sqrt{\frac{1}{\sqrt{5}}-\frac{1}{5}} \approx -0.17\)& \(\frac{1}{3 \sqrt{5}} \approx 0.15\)\\
    \hline
  \end{tabular}

  \begin{tabular}{|c|c|c|c|}
    \hline
    \(W_{\Pi_3}(x_p, x_q)\) & \(x_q=0\) & \(x_q=1\) & \(x_q=2\)\\
    \hline
    \(x_p=0\) & \(\frac{1}{60} \left(\sqrt{5}-4 \sqrt{5 \left(3 \sqrt{5}-5\right)}+5\right) \approx -0.07\)& \(\frac{1}{60} \left(-5 \sqrt{5}-4 \sqrt{5 \left(\sqrt{5}+1\right)}+15\right) \approx -0.20 \)& \(\frac{\sqrt{10-2 \sqrt{5}}+2}{6 \sqrt{5}} \approx 0.32\)\\
    \(x_p=1\) & \(\frac{1}{60} \left(\sqrt{5}+2 \sqrt{5 \left(3 \sqrt{5}-5\right)}+5\right) \approx 0.22\)& \(\frac{1}{60} \left(-5 \sqrt{5}+2 \sqrt{5 \left(\sqrt{5}+1\right)}+15\right) \approx 0.20\)& \(\frac{1}{3 \sqrt{5}}-\frac{1}{3 \sqrt{2 \left(\sqrt{5}+5\right)}} \approx 0.06\)\\
    \(x_p=2\) & \(\frac{1}{60} \left(\sqrt{5}+2 \sqrt{5 \left(3 \sqrt{5}-5\right)}+5\right) \approx 0.22\)& \(\frac{1}{60} \left(-5 \sqrt{5}+2 \sqrt{5 \left(\sqrt{5}+1\right)}+15\right) \approx 0.20\)& \(\frac{1}{3 \sqrt{5}}-\frac{1}{3 \sqrt{2 \left(\sqrt{5}+5\right)}} \approx 0.06\)\\
    \hline
  \end{tabular}
  
  \begin{tabular}{|c|c|c|c|}
    \hline
    \(W_{\Pi_4}(x_p, x_q)\) & \(x_q=0\) & \(x_q=1\) & \(x_q=2\)\\
    \hline
    \(x_p=0\) & \(\frac{1}{30} \left(2\ 5^{3/4} \sqrt{2}-2 \sqrt{5}+5\right) \approx 0.33\)& \(\frac{1}{30} \left(2 \sqrt{10 \left(\sqrt{5}-2\right)}+5\right) \approx 0.27\)& \(\frac{1}{3} \left(\sqrt{1-\frac{2}{\sqrt{5}}}+\frac{1}{\sqrt{5}}\right) \approx 0.26\)\\
    \(x_p=1\) & \(\frac{1}{30} \left(-5^{3/4} \sqrt{2}-2 \sqrt{5}+5\right) \approx -0.14\)& \(\frac{1}{30} \left(5-\sqrt{10 \left(\sqrt{5}-2\right)}\right) \approx 0.12\)& \(-\frac{\sqrt{5-2 \sqrt{5}}-2}{6 \sqrt{5}} \approx 0.09\)\\
    \(x_p=2\) & \(\frac{1}{30} \left(-5^{3/4} \sqrt{2}-2 \sqrt{5}+5\right) \approx -0.14\)& \(\frac{1}{30} \left(5-\sqrt{10 \left(\sqrt{5}-2\right)}\right) \approx 0.12\)& \(-\frac{\sqrt{5-2 \sqrt{5}}-2}{6 \sqrt{5}} \approx 0.09\)\\
    \hline
  \end{tabular}

  \begin{tabular}{|c|c|c|c|}
    \hline
    \(W_{\Pi_5}(x_p, x_q)\) & \(x_q=0\) & \(x_q=1\) & \(x_q=2\)\\
    \hline
    \(x_p=0\) & \(\frac{1}{30} \left(-2 5^{3/4} \sqrt{2}-2 \sqrt{5}+5\right) \approx -0.30\)& \(\frac{1}{30} \left(2 \sqrt{10 \left(\sqrt{5}-2\right)}+5\right) \approx 0.27\)& \(-\frac{\sqrt{5-2 \sqrt{5}}-1}{3 \sqrt{5}} \approx 0.04\)\\
    \(x_p=1\) & \(\frac{1}{30} \left(5^{3/4} \sqrt{2}-2 \sqrt{5}+5\right) \approx 0.18\)& \(\frac{1}{30} \left(5-\sqrt{10 \left(\sqrt{5}-2\right)}\right) \approx 0.12\)& \(\frac{\sqrt{5-2 \sqrt{5}}+2}{6 \sqrt{5}} \approx 0.20\)\\
    \(x_p=2\) & \(\frac{1}{30} \left(5^{3/4} \sqrt{2}-2 \sqrt{5}+5\right) \approx 0.18\)& \(\frac{1}{30} \left(5-\sqrt{10 \left(\sqrt{5}-2\right)}\right) \approx 0.12\)& \(\frac{\sqrt{5-2 \sqrt{5}}+2}{6 \sqrt{5}} \approx 0.20\)\\
    \hline
  \end{tabular}
  
  \label{tab:weylsymbolsofV}
  \caption{Weyl Symbols of \(\hat \Pi_i\) projectors in Table~\ref{tab:KCSBexpvals}.}
\end{table*}

\end{document}